\newtheorem{fact}{Fact}
\newcommand{\SAT}{\textsc{3-CNF-SAT}}
\newcommand{\QBF}{\textsc{3-CNF-QBF}}
\newcommand{\scrabble}{\textsc{Scrabble}}
\newcommand{\scrabblesol}{\textsc{Scrabble-Solitaire}}
\newcommand{\test}[3]{c_{#2}^{#3 #1}}  
\newcommand{\bag}{\sigma}
\newcommand{\position}{\pi}
\newcommand{\play}{\Pi}
\newcommand{\turn}{p}
\newcommand{\board}{\mathcal{B}} 
\newcommand{\rack}[1]{r^{#1}}
\newcommand{\score}[1]{s^{#1}}
\newcommand{\s}[1]{\emph{\textbf{#1}}} 
\begin{document}

\title{Scrabble is PSPACE-Complete}

\author{Michael Lampis\inst{1} \and Valia Mitsou\inst{2} \and Karolina So{\l}tys\inst{3}}
\institute{KTH Royal Institute of Technology, \email{mlampis@kth.se} \and Graduate Center, City University of New York, \email{vmitsou@gc.cuny.edu} \and Max Planck Institute f\"{u}r Informatik, \email{ksoltys@mpi-inf.mpg.de}}

\maketitle

\begin{abstract} In this paper we study the computational complexity of the
game of Scrabble. We prove the PSPACE-completeness of a derandomized model of
the game, answering an open question of Erik Demaine and Robert Hearn.

\noindent \textbf{Keywords:} \emph{Scrabble, PSPACE-completeness, combinatorial
games, computational complexity} 

\end{abstract}

\section{Introduction}\label{introduction}

In this paper we examine the computational complexity of optimal play in the
game of Scrabble, a board game played by two to four players. In this game each
player takes turns drawing lettered tiles randomly out of a bag and then
attempting to place those tiles on a common $15\times 15$ board, forming words.
Points are awarded depending on the length of the formed words, the value of
the letters used and various bonuses found on the board, with the winner being
the player who has gathered the highest number of points at the end of the
game.\footnote{For a fuller description of the board game of Scrabble see e.g.\
\url{http://en.wikipedia.org/wiki/Scrabble/}} 

Having been invented in the US around the middle of the 20th century, Scrabble
is now one of the most popular and well-known board games in the world. Besides
the original english language version, Scrabble has been translated to dozens
of other languages, while more than one hundred million Scrabble sets have been
sold worldwide.

Since Scrabble is such a successful game, it becomes a natural question to
determine the computational complexity of finding an optimal play. Similar
questions have already been answered for several other popular board games,
such as Othello, Chess and Go, typically classifying their complexity as either
PSPACE or EXPTIME-complete. This is, however, complicated by the fact that,
unlike those games, chance plays a non-negligible part in a match of Scrabble,
as players don't know in advance the order in which tiles will be drawn. Still,
much insight could be gained by investigating the complexity of a
perfect-information version of Scrabble, where the order in which tiles will be
drawn is known beforehand. In fact, this was listed as an open problem by
Demaine and Hearn \cite{demaine2005playing}. This is exactly the question we
tackle in this paper by showing that this derandomized version of Scrabble is
PSPACE-complete.

This result on its own is probably not surprising, since most interesting board
games are at least PSPACE-hard, and Scrabble is trivially in PSPACE from the
fact that tiles cannot be removed from the board once they are placed. In
addition to settling the complexity question though, we go about trying to
understand what exactly makes the problem hard.

Informally, at any given round a Scrabble player is confronted with two tasks:
deciding which word to form and deciding where to place it on the board. Though
the tasks are not independent, since the formed word must be using some tiles
already on the board, they are conceptually different and the hardness of the
game could stem from either one. Put another way, it could be the case that
deciding which word is best to play is easy if there is only one possible
position where a word can be placed, or that deciding where to place the next
word is easy if only one word can be made with the available tiles.

In fact, we will present two different hardness proofs arguing that both of
these tasks are hard. In one reduction the players will be given appropriate
tiles so that they will only have one possible word to play in each round, with
a choice of two places to place it. In the other, players will be forced to
play in a specific place on the board, but will be able to choose between two
different words. In both cases, the problem of deciding optimal play will still
turn out to be PSPACE-complete. Along the way, we can show that even a
single-player version of the game, where one player tries to place all tiles,
is NP-complete in both cases. Thus, we establish that during the course of a
game, Scrabble players need to perform not one, but two computationally hard
tasks, which is probably the reason why Scrabble is so much fun to play.

\section{Our model of Scrabble - Definitions}\label{model}

Informally, the question we are trying to answer is, given a Scrabble position
how hard is it to determine the best playing strategy? As mentioned, we will
tackle this problem in a perfect information setting, where the contents of the
bag and the order in which they are drawn are known in advance to both players
(and therefore both players know each other's letters).

Moreover, since Scrabble is a finite game, in order to study its computational
complexity we need to consider some unbounded generalization. The most natural
way to go forward is to consider the game played on an $n\times n$ board. In
addition, we assume that the bag initially contains a number of tiles that
depends on $n$, since the restriction of the game where the bag contains a
fixed number of tiles will yield at most a polynomial number of possible
configurations, putting the problem trivially in P.

Beyond the size of the board and the number of letters in the bag, we need to
define an alphabet, a set of acceptable words and a rack size which will
determine how many letters each player has on hand. All of these can be allowed
to depend on the input, but since we are interested in proving hardness results
we are happier when we can establish them even if those parameters are fixed
constants. In fact, in Theorem \ref{thm:constant-size} we prove that Scrabble
is PSPACE-hard even with these restrictions, at the cost of making the
reduction a little technical.

We will deal with a plain version of the game, where all letters have the same
value and there are no premium positions on the board (clearly, the more
general case with multiple values and possible premiums is harder). Also, for
the most part we will assume that players are not allowed to exchange tiles or
pass. Nevertheless, we will give arguments after Theorem
\ref{thm:constant-size}  explaining why allowing players to pass does not
affect our results.

Let us now give a more formal definition of the problem:

\begin{definition}

We define a \emph{Scrabble game} $\mathcal{S}$ to be an
ordered quadruple $(\Sigma, \Delta, k, \\ \position_0)$ where: $\Sigma $ is a
finite \emph{alphabet}, $\Delta \subset \Sigma^*$ is a finite
\emph{dictionary}, $k \in \mathds{N}$ is the size of the rack and $\position_0$
is the initial position of the game, defined as below.

%
%

\end{definition}

\begin{definition} 

A \emph{position} $\position$ in a scrabble game is an ordered septuple 
$(\board, \bag, \turn, \rack{1},\\ \rack{2}, \score{1}, \score{2})$, where 
$\board \in \mathbf{M}_{n \times n}(\Sigma)$ is the \emph{board}, 
$\bag \in \Sigma^*$ is a sequence of lettered tiles called the 
\emph{bag}, 
$\turn \in \{1,2\}$ is the number of the active player,
$\rack{i}, \text{where } i \in \{1,2\}$, are multisets with symbols 
from $\Sigma$ denoting the contents of the rack of the first and the 
second player respectively and $\score{i} \in \mathds{N}, \text{where } i \in
\{1,2\}$, are the scores of the first and the second player respectively. 

%
%
%
%

\end{definition}

\begin{definition}

A \emph{play} $\play = \position_1 \dots \position_l$ is a sequence of
positions such that, for all $i$, $\position_{i+1}$ is attainable from
$\position_i$ by the active player by forming a \emph{proper play} on the
board.  

\end{definition}

A proper play uses any number of the player's tiles from the rack to form a
single continuous word (\emph{main word}) on the board, reading either
left-to-right or top-to-bottom. The main word must either use the letters of
one or more previously played words, or else have at least one of its tiles
horizontally or vertically adjacent to an already played word. If words other
than the main word are newly formed by the play, they are scored as well, and
are subject to the same criteria for acceptability. All the words thus formed
must belong to the dictionary. After forming a proper play, the sum of the
lengths of all words formed is added to the active player's points, letters
used are removed from the player's rack and the rack is refilled up to $k$
letters (or less, if $|\sigma_i|<k$) with the appropriate number of letters
forming the prefix of $\sigma_i$.


\begin{definition} 

A play $\play = \position_1 \dots \position_l$ is \emph{finished} if player \
$l+1 \bmod 2$ is unable to form a proper play, or if $\bag_l = \varepsilon$
(i.e.  the bag is empty). The \emph{winner} of a finished play is the player
with the greater number of points (draws are possible).

\end{definition}

%
%
%
%
%

We will establish PSPACE-hardness via two reductions from \QBF, the problem of
deciding whether a quantified boolean formula is true. This is a well-known
PSPACE-complete problem often used to establish hardness for games
\cite{papadimitrioucomputational}. We are also interested in the variation of
the game where there is only one player who tries to place all the tiles on the
board, which we call \scrabblesol. Essentially the same constructions we
present can also establish NP-hardness for \scrabblesol\ if one begins the
reduction from \SAT.

\section{Hardness due to placement of the words}\label{placement}

In this section we prove that \scrabble\ is PSPACE-complete due to ability of
players to place their formed word in more than one places.\footnote{In this
section we prove hardness of a version of \scrabble\ with an unbounded size
alphabet. In section \ref{formation} we prove the hardness of the natural
variant of derandomized \scrabble, where the alphabet, word, rack and
dictionary sizes are constants.} 


We will first prove that the one-player version \scrabblesol\ is NP-complete.
PSPACE-completeness of \scrabble\ follows with slight modifications.



\begin{lemma}\label{lemma:np} 

\scrabblesol\ is NP-complete.

\end{lemma}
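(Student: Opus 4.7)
The plan is first to verify that \scrabblesol\ lies in NP and then to reduce from \SAT\ to establish hardness. For membership, I would observe that a play of \scrabblesol\ has length polynomial in the input size (each turn consumes at least one tile of the bag or the initial rack), so the player's entire sequence of moves can be guessed as a polynomial-size witness and its legality and outcome verified in polynomial time.

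For NP-hardness, I would reduce from \SAT. Given a formula $\phi$ on variables $x_1,\dots,x_n$ with clauses $C_1,\dots,C_m$, I would construct a \scrabblesol\ instance in which a successful play---one that empties both the bag and the rack---exists exactly when $\phi$ is satisfiable. In keeping with this section's theme, the reduction would arrange that in each turn the player's rack admits a \emph{unique} legal word, while two legal \emph{placements} are available, corresponding to the two truth values of a variable. Specifically, the initial board would contain $n$ independent \emph{variable gadgets}, each consisting of two disjoint horizontal lanes $T_i$ and $F_i$ anchored by distinct letters so that the word $w_i$ (whose tiles the bag delivers on turn $i$) fits in exactly one of the two lanes and nowhere else. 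Since the alphabet is unbounded, each $w_i$ can be designed to be the only dictionary entry spellable from the current rack and the available anchors, so the only real choice is the placement, which I would interpret as the assignment $x_i := \mathrm{true}$ or $x_i := \mathrm{false}$.

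After the $n$ variable moves, the bag would deliver tiles for $m$ \emph{clause words} $u_1,\dots,u_m$. For each clause $C_j$, the word $u_j$ would be playable only if it can be anchored through a tile that one of its literals' earlier placements left in a designated ``satisfying'' position; if every literal of $C_j$ went the wrong way, no legal placement of $u_j$ exists and the player is stuck. Thus every clause word can be played iff every clause is satisfied, so a winning play exists iff $\phi$ is satisfiable. The bag would be sized so that finishing all $n+m$ words is exactly what it means to empty the rack and bag.

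The main obstacle will be enforcing the \emph{independence} of the gadgets: every cross-word created by a legal placement must itself lie in $\Delta$, no unintended word may be formable from any rack in any board configuration, and the two lanes of one variable gadget must not leak letters that accidentally satisfy or block another gadget. I would address this by separating gadgets on the board with columns of ``filler'' letters that appear in no dictionary word, by using per-variable and per-clause private letters so that interactions are purely intentional, and by tailoring the anchor letters so that the only side-words produced at the boundaries are short trivial entries that can be explicitly added to $\Delta$. With these precautions, correctness follows in both directions: a satisfying assignment of $\phi$ dictates an explicit legal play, and conversely the forced structure of each turn ensures that any successful play encodes a satisfying assignment, completing the reduction.
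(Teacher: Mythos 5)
Your proposal is correct and follows essentially the same route as the paper: a forced-order bag delivering one word per turn, variable gadgets offering exactly two placements that encode the truth value, clause words that can only be anchored in the gadget of a satisfied literal, and an unbounded alphabet of per-variable and per-clause private letters (plus filler walls and explicitly whitelisted side-words) to keep gadgets independent. The only cosmetic difference is that the paper realizes the clause-phase choice by letting the player permute the three clause letters (so the chosen permutation determines which literal's gadget is extended), whereas you phrase it purely as a choice among placements of a single fixed word.
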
 



Proving that the problem is in NP is straightforward. To estabilish the
NP-hardness of \scrabblesol, we will construct a reduction to this problem from
\SAT. Given 3-CNF propositional formula $\phi$ with $n$ variables $x_1, x_2,
\ldots, x_n$ and $m$ clauses, we construct in polynomial time a
polynomial-sized Scrabble-Solitaire game $\mathcal{S}$, such that $\phi$ is
satisfiable iff $\mathcal{S}$ is solvable.

The general idea of the proof is as follows. We will create gadgets associated
to variables, where the player will assign values to these variables.  We will
ensure that the state of the game after the value-assigning phase completes,
will correspond to a consistent valuation. Then the player will proceed to the
testing phase, when for each clause she will have to choose one literal from
this clause, which should be true according to the gadget of the respective
variable.  If she cannot find such a literal, she will be unable to complete a
move.  Thus we will obtain an immediate correspondence between the
satisfiability of the formula and the outcome of the game.



The gadget for variable $x_i$ is shown in Figure \ref{var_place} in the
appendix. The construction of the dictionary and the sequence in the bag will
ensure that at some point during the value-assigning, the only way for the
player to move on is to form a word like in Figure \ref{assgn:false} or to form
a horizontally symmetrical arrangement (Fig. \ref{assgn:true}).  


\begin{figure} 
\subfloat[$x_i$ set to false.]
{\label{assgn:false}\includegraphics[width=0.35\textwidth]{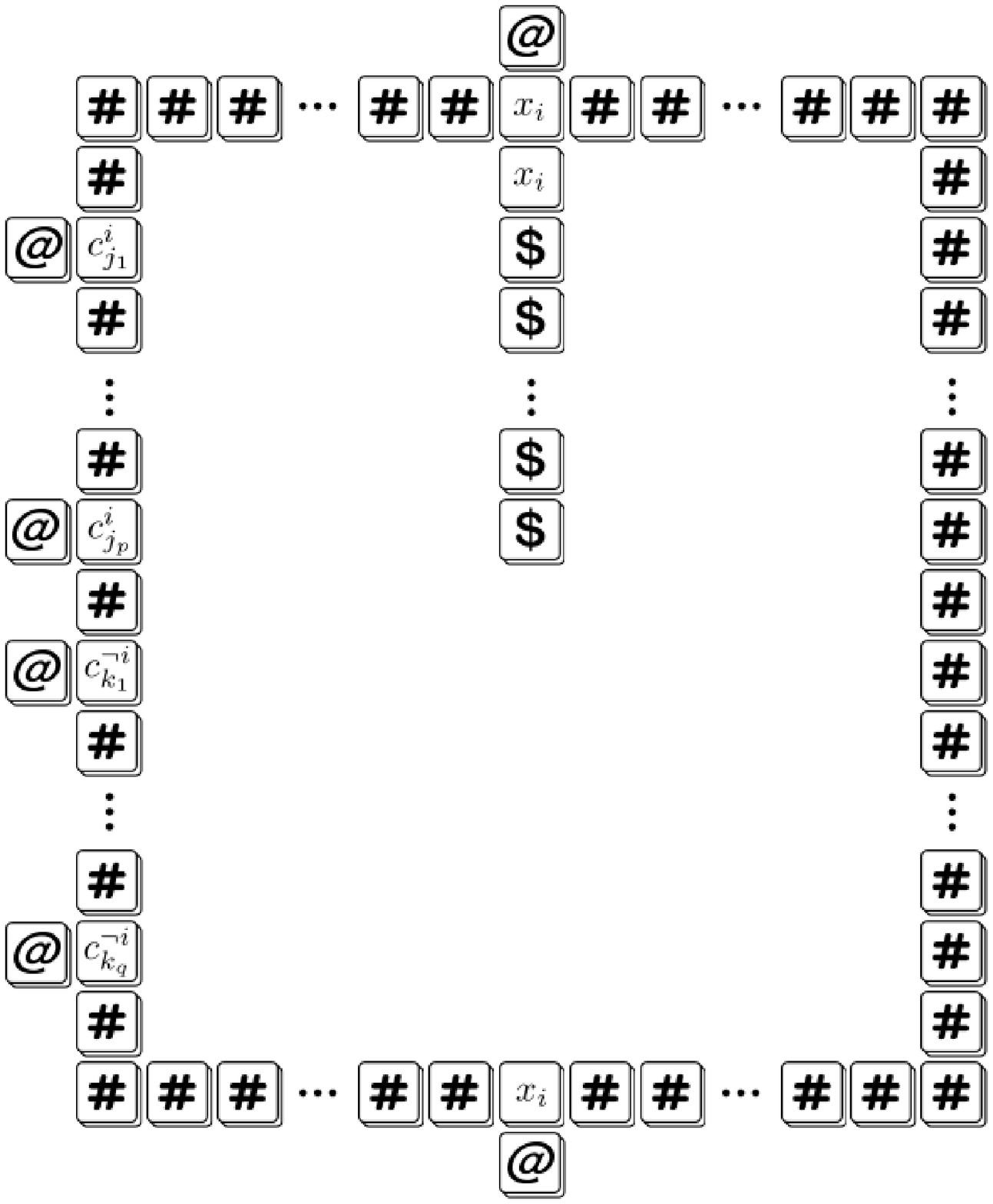}}
\subfloat[$x_i$ set to true.]
{\label{assgn:true}\includegraphics[width=0.35\textwidth]{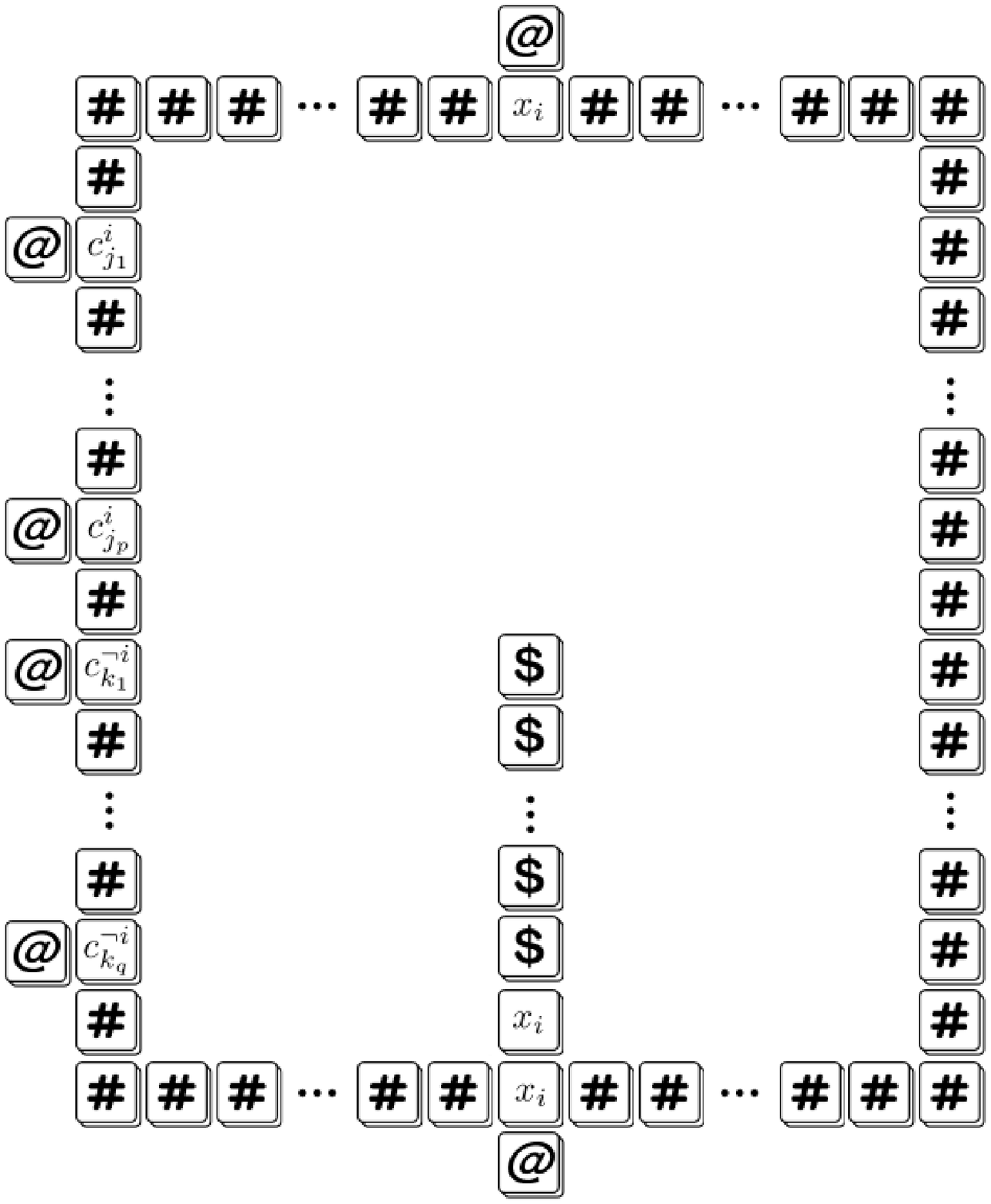}}
\subfloat[A word played for a clause that $\neg x_i$ satisfies]
{\label{testfig} \includegraphics[width=0.35\textwidth]{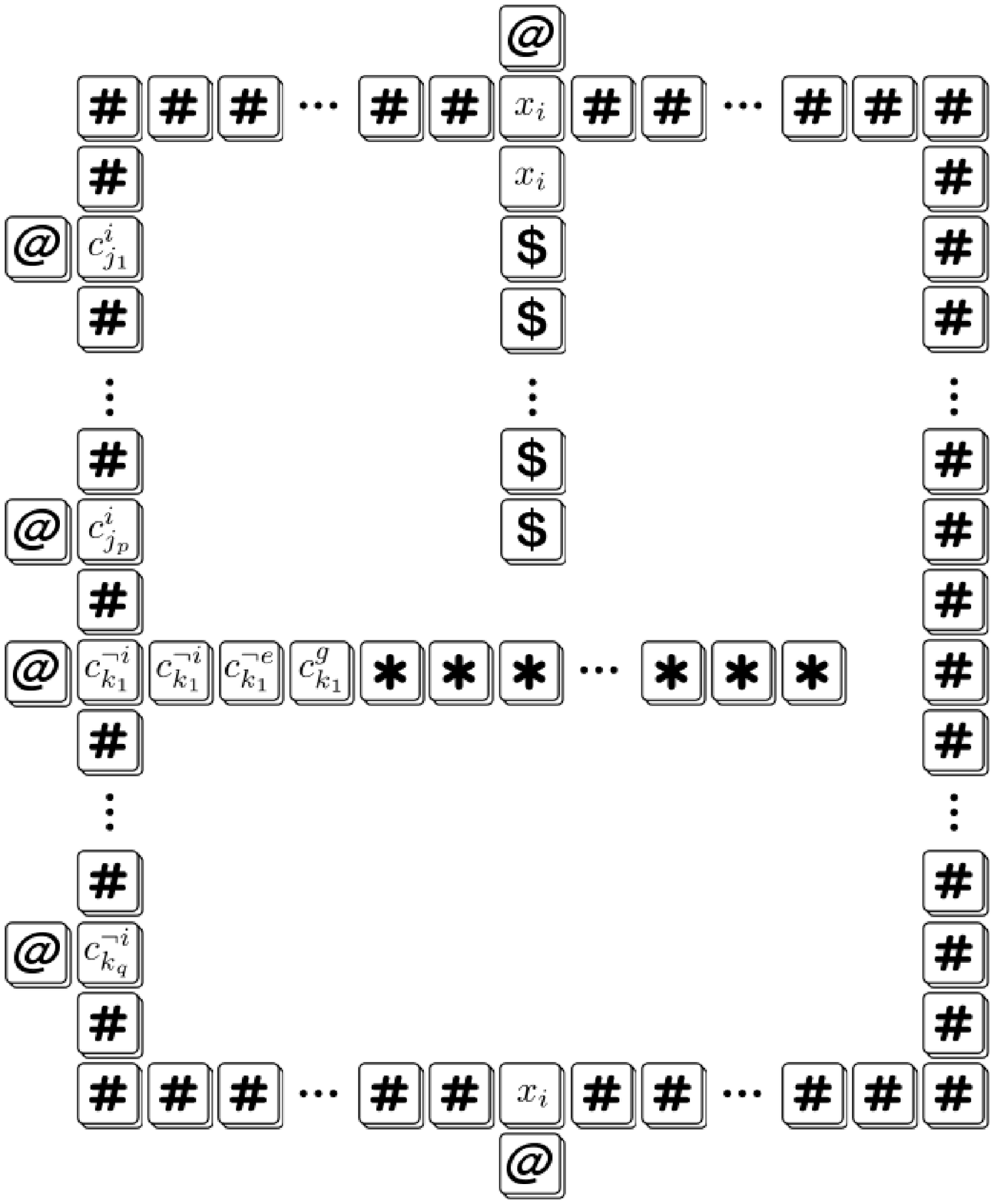}}
\caption{Variable $x_i$ with an assigned value.} \label{fig:assignments}
\end{figure}

During the test phase, for each clause $c_i = (l_1 \vee l_2 \vee l_3)$ in every
play there will be a position, when the player will be obliged to choose one of
the literals from the clause, in whose gadget she will try to play a word. She
will be able to form a word there iff the value of the corresponding variable,
which has been set in the earlier phase, agrees with the literal.


Let us describe the game more formally. The alphabet $\Sigma$ of $\mathcal{S}$
will contain: 
\begin{itemize} 
   
   \item a symbol \s{$x_i$} for every variable $x_i$; 
   \item a symbol \s{$\test{i}{j}{}$} (\s{$\test{i}{j}{\neg}$}), for every 
	 positive (negative) appearence of variable $i$ in the clause $j$; 
   \item auxilliary symbols: \s{\$}, \s{\#}, \s{$\ast$} and \s{@}.  

\end{itemize} 

Let $r$ be such that no literal appears in more than $r$ clauses. The rack size 
will be $k=2r$. 


The dictionary $\Delta$ will contain the following words:

\begin{itemize}

\item the words \s{@}\s{$x_i$}\s{$x_i$}\s{\$}$^{2r-1}$ and 
      \s{\$}$^{2r-1}\s{$x_i$}\s{$x_i$}\s{@}$
      for every variable $x_i$, 
\item the word \s{@}\s{$\test{a}{j}{(\neg)}$}\s{$\test{a}{j}{(\neg)}$}\s{$\test{b}{j}{(\neg)}$}\s{$\test{c}{j}{(\neg)}$}\s{$\ast$}$^{2r-3}$,
      for every permutation $(a, b, c)$ of the indices of the literals appearing 
      in the clause $c_j$.  
\item We also add all the dummy words appearing initially on the board.


\end{itemize} 

The sequence in the bag $\bag$ will be a concatenation of the following: 

$$ \bag = \prod_{i=1}^n \left(x_i \s{\$}^{2r-1} \right)
\prod_{j=1}^m\left(\test{a}{j}{(\neg)}\test{b}{j}{(\neg)}\test{c}{j}{(\neg)}\s{$\ast$}^{2r-3}\right)$$

The time period, when at least one of the letters $x_i$ are still on the rack 
will be called the \emph{value-assigning phase}. The following time period will 
be called the \emph{satisfaction phase}. 

We can now prove the following facts (omitted proofs appear in the appendix).

\begin{fact}\label{eatsall}

The player has always to empty her rack in order to perform a proper
play.

\end{fact}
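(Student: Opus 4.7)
The plan is to prove the fact by induction on the move number. The invariant I would maintain is that, at the start of every turn, the rack contains a clean ``packet'' drawn as a contiguous block from $\bag$---either a variable packet of size $2r$ consisting of one $x_i$ tile and $2r-1$ copies of the symbol \s{\$}, or a clause packet of size $2r$ consisting of three clause-literal tiles corresponding to the literals of some $c_j$ together with $2r-3$ copies of \s{*}. The base case is immediate: the rack begins empty and is refilled with the first $2r$ letters of $\bag$, which by construction of $\bag$ is the variable packet for $x_1$. The inductive step will simultaneously establish the fact itself: any proper play exhausts the entire rack, so the next refill once more pulls a full $2r$-tile packet from $\bag$.

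For the inductive step, first suppose the rack holds a variable packet. Scanning $\Delta$, the only dictionary words containing the symbol \s{\$} are the two assignment words associated with the variables, each of length $2r+2$ and each containing a run of $2r-1$ consecutive \s{\$}'s. Since the rack contains exactly $2r-1$ copies of \s{\$}, any proper play that places at least one \s{\$} tile on the board must form an assignment word, and it must consume the full \s{\$}-run together with exactly one $x_j$ tile from the rack---that is, all $2r$ of the rack's tiles. Conversely, a proper play that places no \s{\$} tile could only use the single non-\s{\$} rack tile $x_i$; but $x_i$ never appears in any dictionary word without the accompanying $2r-1$ copies of \s{\$}, so no such play is legal. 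Since every proper play must place at least one rack tile on the board, every proper play in this phase empties the rack. A completely symmetric argument, with \s{*} in the role of \s{\$} and the clause words of $\Delta$ in the role of the assignment words, handles the case of a clause packet.

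The main obstacle is to exclude ``parasitic'' plays that would form some unintended dictionary word using only a strict subset of the rack together with many letters already present on the board---for instance, a play that reuses several board tiles from an older assignment or clause word to avoid discharging the \s{\$}-block or \s{*}-block. Ruling these out depends on the careful design of the initial position $\position_0$ and the inclusion of the dummy words in $\Delta$: the gadget pictured in Figure~\ref{var_place} must leave no empty cell adjacent to an existing word that can be legally extended into any $w\in\Delta$ using fewer than $2r$ rack tiles. Once this placement claim has been checked against the explicit geometry of the gadget, the counting argument above closes the proof.
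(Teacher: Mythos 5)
Your argument is essentially the paper's: both proofs reduce the Fact to the observation that, given a rack holding either $\{x_i\} \cup \{\s{\$}^{2r-1}\}$ or the three clause-literal tiles plus \s{$\ast$}$^{2r-3}$, every word of $\Delta$ completable from those tiles has length $2r+2$ and therefore consumes all $2r$ rack tiles together with two letters already on the board. One misstatement, though: it is not true that $x_i$ ``never appears in any dictionary word without the accompanying $2r-1$ copies of \s{\$}'' --- the dummy words \s{@}\s{$x_i$}, \s{$x_i$}\s{@} and \s{\#}$^r$\s{$x_i$}\s{\#}$^{r-1}$ are all members of $\Delta$, so a priori a single $x_i$ tile dropped next to a bare \s{@} would be legal. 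The real reason such a play fails is the geometric saturation of the gadget (every \s{@} and \s{\#} on the board already has its neighbours occupied), which is exactly the check you defer to your final paragraph and which the paper likewise leaves to Figure \ref{var_place} rather than arguing explicitly. So your proof sits at the same level of completeness as the paper's, with the induction on the rack contents made somewhat more explicit.
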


\begin{fact} \label{fact2}

During the value-assigning phase, at each turn the player performs an action
that is in our setting equivalent to a correct valuation of a variable, as
shown in Figure	\ref{fig:assignments}.

\end{fact}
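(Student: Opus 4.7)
The plan is to prove by induction on $i$ that, at the start of the $i$-th turn of the value-assigning phase, the rack contains exactly one \s{$x_i$} tile together with $2r-1$ copies of \s{\$}, and that the player's only legal plays are the two configurations shown in Figures~\ref{assgn:false} and~\ref{assgn:true}, corresponding to the valuations $x_i=\text{false}$ and $x_i=\text{true}$ respectively. The base case is immediate from the definition of $\bag$: the first $2r$ letters are \s{$x_1$}\s{\$}$^{2r-1}$, which fills the rack exactly.

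Given the invariant, I would deduce the forced move in three steps. First, by Fact~\ref{eatsall} the player must place all $2r$ of her tiles in a single proper play. Second, the only words in $\Delta$ that contain any \s{\$} tile are the mirrored pair \s{@}\s{$x_i$}\s{$x_i$}\s{\$}$^{2r-1}$ and \s{\$}$^{2r-1}$\s{$x_i$}\s{$x_i$}\s{@}; each of these requires a second \s{$x_i$} and an \s{@} tile already on the board in the right geometric configuration, and by construction the only place satisfying both constraints is the $x_i$-gadget of Figure~\ref{var_place}. Third, inside that gadget there are precisely two orientations in which the word can be completed, yielding the two valuations. The inductive step then follows because after the play the rack is empty and is refilled with the next $2r$ letters of $\bag$, namely \s{$x_{i+1}$}\s{\$}$^{2r-1}$, restoring the invariant for $i+1$.

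The main obstacle is ruling out ``parasitic'' plays that try to exploit tiles placed during earlier assignments or the pre-existing dummy words on the board. Concretely, one must verify that (a) no other variable gadget contains an \s{$x_i$} tile, so the player cannot slide her \s{$x_i$} into an ``older'' gadget; (b) no word in $\Delta$ uses fewer than $2r-1$ copies of \s{\$}, so she cannot save some dollar signs for later; and (c) when either orientation is played in the $x_i$-gadget, every crossing word formed perpendicular to the main word is either trivial or already listed in $\Delta$ as a dummy word. All three checks are handled by the explicit layout of the variable gadget in Figure~\ref{var_place} together with the inclusion of the dummy words in $\Delta$. Once these are verified, the forced-move structure above yields a one-to-one correspondence between sequences of plays during the value-assigning phase and truth assignments to $x_1,\dots,x_n$, which is exactly the content of the fact.
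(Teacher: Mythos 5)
Your proposal is correct and follows essentially the same route as the paper: invoke Fact~\ref{eatsall} to force the rack to be emptied each turn, use the order of $\bag$ to pin down the rack contents as \s{$x_i$}\s{\$}$^{2r-1}$, and then observe that the dictionary and the gadget geometry leave only the two placements of Figure~\ref{fig:assignments}. The paper compresses your explicit induction and the ``parasitic play'' checks into the phrase ``a simple case by case analysis,'' so your write-up is simply a more detailed version of the same argument.
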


\begin{fact} \label{fact3} 

During the test phase, at each turn the player's actions are equivalent to
checking whether a clause, that had not been checked before, is satisfied by a
literal of the player's choice, as shown in Figure \ref{testfig}.  

\end{fact}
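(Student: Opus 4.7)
The plan is to proceed turn by turn through the test phase and show that, at each turn, the player's only legal proper plays are in bijection with choices of a satisfying literal for the next unchecked clause. The argument runs entirely at the level of multisets of tiles and available anchor squares, once the state of the board at the end of the value-assigning phase has been pinned down.

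First I would fix the contents of the rack during the $j$-th test turn. By Fact \ref{fact2} the value-assigning phase ends with an empty rack, consistent variable gadgets, and the bag having had exactly the prefix $\prod_{i=1}^n (x_i\s{\$}^{2r-1})$ consumed. Combined with Fact \ref{eatsall}, which forces every turn to empty the rack, an easy induction on $j$ shows that at the start of the $j$-th test turn the rack is refilled with exactly the next $2r$ tiles of $\bag$, namely $\test{a}{j}{(\neg)}\test{b}{j}{(\neg)}\test{c}{j}{(\neg)}\s{$\ast$}^{2r-3}$, and that the player must use all of them.

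Next I would enumerate the candidate plays. Since the symbols $\test{i}{j}{}$, $\test{i}{j}{\neg}$ and \s{$\ast$} do not appear in any value-assigning word, the only dictionary entries whose letter multiset is compatible with the current rack plus a small number of board letters are the clause words \s{@}\s{$\test{\sigma}{j}{(\neg)}$}\s{$\test{\sigma}{j}{(\neg)}$}\s{$\test{\tau}{j}{(\neg)}$}\s{$\test{\upsilon}{j}{(\neg)}$}\s{$\ast$}$^{2r-3}$ for permutations $(\sigma,\tau,\upsilon)$ of $(a,b,c)$. Each such word requires two copies of the chosen literal $\test{\sigma}{j}{(\neg)}$, but the rack supplies only one of each of $\test{a}{j}{(\neg)},\test{b}{j}{(\neg)},\test{c}{j}{(\neg)}$, so the second copy must come from the board. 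I would then show, using the construction of the variable gadget and the way the value-assigning word was placed in Fact \ref{fact2}, that the only square on the board carrying a $\test{\sigma}{j}{(\neg)}$ tile is in the gadget for $x_\sigma$, and that such a tile is accessible precisely when the gadget has been set to the polarity that makes the literal true, as in Figure \ref{testfig}. Hence a proper play exists at turn $j$ if and only if at least one literal of $c_j$ is satisfied by the earlier valuation, and the choice of which clause word to play coincides with the choice of a satisfying literal.

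The main obstacle, and the step that carries the real geometric content, is the anchoring argument: I must rule out alternative placements of the clause word that would let the player exploit tiles from unrelated gadgets, from the dummy words of $\position_0$, or from clause words placed at earlier test turns. This requires checking that the \s{@} prefix, the star suffix of length $2r-3$, and the relative position of the doubled literal in the dictionary entries leave exactly one legal alignment per accessible anchor, and that the bound $k=2r$ together with the assumption that no literal appears in more than $r$ clauses guarantees enough distinct anchors inside each variable gadget for all clauses that could reference it, while preventing two different turns from reusing the same anchor. Once this is established, distinct test turns necessarily correspond to distinct clauses, and Fact \ref{fact3} follows by induction on $j$.
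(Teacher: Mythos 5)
Your proposal follows essentially the same route as the paper's own proof: use Facts \ref{eatsall} and \ref{fact2} to pin down the rack as $\{\test{a}{j}{(\neg)}, \test{b}{j}{(\neg)}, \test{c}{j}{(\neg)}, \ast^{2r-3}\}$, observe that the only dictionary-compatible plays extend one of the anchors \s{@}$\test{v}{j}{(\neg)}$ via the doubled-literal clause words, and note that such an extension is legal precisely when the assignment-phase valuation agrees with the chosen literal. The paper dispatches the final geometric verification with ``a simple analysis shows,'' whereas you explicitly flag it as the remaining obligation; the content is the same.
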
 

\begin{proof}

Basing on the previous two facts we know that during each round in the
satisfaction phase, the contents of the player's rack are
$\{\test{a}{j}{(\neg)}, \test{b}{j}{(\neg)}, \test{c}{j}{(\neg)}$,
$\ast^{2r-3}$\} for $a$, $b$ and $c$ being the indices of the literals
appearing in clause $j$. One can easily see that the player can form a legal
word from these letters only by extending one of the 3 words
\s{@}$\test{v}{j}{(\neg)}$, $v \in \{a, b, c\}$, by arranging her symbols in a
permutation $(a', b', c')$ where $v = a'$.

The player can choose any of such permutations, which means she can choose the
literal, in whose gadget she will play the word. A simple analysis shows that
the player can play this word in that position iff the valuation of the
variable agrees with the chosen literal (i. e. if the chosen literal reads
$\neg x_j$, then $x_j$ must have been set to false etc.).

\end{proof}

The above facts imply that the game correctly simulates assigning some
valuation to a 3-CNF formula and checking whether it is satisfied. It is easy
to check that the size instance of the Scrabble solitaire game obtained by the
reduction is polynomial in terms of the size of the input formula and that the
instance can be computed in polynomial time. We have thus shown that
\scrabblesol\ is NP-complete.  


To prove the PSPACE-completeness of \scrabble\ it suffices to notice that the
above reduction from \SAT\ to \scrabblesol\ easily translates to the analogous
reduction from \QBF (a detailed proof of the following theorem can be found in 
the appendix).

\begin{theorem} \label{thm:pspace1}

\scrabble\ is PSPACE-Complete.

\end{theorem}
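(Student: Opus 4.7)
The plan is to adapt the construction from Lemma \ref{lemma:np} to reduce from \QBF\ in place of \SAT. Membership in PSPACE is standard: any proper play places at least one new tile on the $n \times n$ board, so every play has length $O(n^2)$ and each position has polynomial size, so the winner of any initial position can be determined by the usual alternating polynomial-space recursion on game trees. For hardness, given a closed formula $\phi = Q_1 x_1 \cdots Q_n x_n\,\psi$ with $\psi$ in 3-CNF, we build a Scrabble game in which Player 1 plays the role of the existential prover and Player 2 the role of the universal adversary. The variable gadgets, dictionary, and overall flow of a value-assigning phase followed by a satisfaction phase are inherited verbatim from the one-player reduction, so the analogues of Facts \ref{eatsall}, \ref{fact2} and \ref{fact3} continue to apply to whichever player happens to be active when the relevant tiles appear on her rack: she is forced to empty her rack, and the only way to do so is to orient the $x_i$-word in one of the two ways that encode the two truth values.

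The main technical obstacle is parity: Scrabble forces strict alternation of players, while the pattern of quantifiers in $\phi$ is arbitrary. To align the two, I would introduce a disjoint ``parking'' region on the board together with additional dummy words in $\Delta$, and schedule the sequence in $\bag$ so that the $x_i$-tile and its accompanying padding always land on the rack of the player corresponding to $Q_i$. Whenever the other player has the move in between, she is handed a block of letters that admits exactly one legal proper play, namely extending a prearranged dummy word in the parking region. This emulates a forced pass without violating the no-pass rule. The same mechanism is used throughout the satisfaction phase, which is carried out entirely by Player 1, while Player 2 is handed parking tiles on each of her intervening turns.

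It remains to check that Player 1 has a winning strategy if and only if $\phi$ is true. By the facts above, the rules leave each player only the single meaningful choice of how to orient her variable-assignment word, which corresponds to assigning a truth value, and then Player 1's literal choices in the satisfaction phase; by the analysis of Fact \ref{fact3}, Player 1 can complete the test word for clause $c_j$ iff the chosen literal is satisfied under the current assignment, so she can finish the game iff the assignment produced in the value-assigning phase satisfies $\psi$. To make the Scrabble outcome mirror this logical outcome, I would tune the letter values so that each test word is worth enough that missing even one of them leaves Player 1 below the running total of Player 2 (whose entire contribution consists of fixed, forced moves). The hardest part of fleshing out the proof will be verifying that no unintended plays arise from the interaction of the parking region with the variable and clause gadgets, but since every letter inserted into $\bag$ has a unique completion dictated by $\Delta$, this is a routine (if tedious) extension of the verification carried out for Lemma \ref{lemma:np}.
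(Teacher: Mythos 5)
Your overall plan is sound and the PSPACE membership argument matches the paper's (implicit) one, but your hardness reduction takes a genuinely different route from the paper's, and one of your ingredients conflicts with the model being used. The paper reduces from \QBF\ in the strictly alternating form $\exists x_1 \forall x_2 \exists x_3\ldots$, which already matches Scrabble's turn order, so the parity problem you spend most of your effort on does not arise: after padding $n$ to be even, player 1 assigns the odd-indexed (existential) variables and player 2 the even-indexed (universal) ones with no parking machinery at all. The paper's actual key trick is elsewhere, in the satisfaction phase: every clause is \emph{duplicated}, so that the two players check clauses alternately and player 2 can always mirror player 1's move on the twin clause. This guarantees player 2 is never the one who gets stuck, and it keeps the two scores in lockstep; the winner is then decided purely by an initial 1-point lead for player 2 plus a single extra tile \s{@} at the end of the bag that lets player 1 form one last two-letter word exactly when the bag empties. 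Your parking-region alternative, in which player 2 is idle throughout the satisfaction phase, could in principle be made to work, but it forces you to rebalance the scores by hand, and your proposed fix --- ``tune the letter values'' --- is not available here: the theorem is stated for the plain model in which all letters have the same value and a play scores the total length of the words formed. As written, your argument would only establish hardness of the more general weighted variant, which the paper explicitly treats as the easier (because more general) target. To repair it within the plain model you would have to calibrate the \emph{lengths} of the parking words so that player 1 is strictly behind at every point where she could get stuck and strictly ahead only after the final clause check --- precisely the bookkeeping that the mirroring-plus-one-point-lead device makes unnecessary. The remaining verification you defer (no unintended interactions between the parking region and the gadgets) is also a nontrivial extra burden that the paper's construction avoids entirely.
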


\section{Hardness due to formation of the words}\label{formation}
In this section we prove the hardness of Scrabble due to the ability of the
players to form more than one words using the same letters. Furthermore, we
will optimize this reduction so that it works even for constant-size $\Sigma,
\Delta$ and $k$.

\begin{theorem} \label{thm:constant-size} 

\scrabble\ is PSPACE complete even when restricted to instances with
constant-size alphabet, dictionary and rack.

\end{theorem}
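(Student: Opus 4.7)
The plan is to extend the formation-based hardness reduction to work with constant-size alphabet, dictionary, and rack. The overall strategy is a reduction from \QBF\ in the spirit of Theorem~\ref{thm:pspace1}, but arranged so that at each critical round the active player is forced to play in a unique location and the only freedom she has is to choose between exactly two dictionary words. A universally (respectively existentially) quantified variable of the QBF instance corresponds to the round of the universal (respectively existential) player in which this binary word-choice is made, thereby yielding hardness from the formation task rather than from placement.

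First, I would design a variable gadget consisting of a partially filled row on the board such that, when the active player draws the letters specified by the bag for that round, exactly two dictionary words can be formed that legally extend the existing pattern, one encoding the value true and one encoding false. Forcing the player to play in that gadget (as opposed to anywhere else) is achieved by combining a carefully chosen prefix sequence in $\bag$ with filler letters (analogous to the \s{\$} and \s{$\ast$} letters of the previous section) that render no other legal placement possible. Clause-checking gadgets then contain slots that a chosen-literal word can extend only if the earlier-chosen value at the corresponding variable matches the literal's sign, exactly as in Fact~\ref{fact3}.

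The reduction to constant parameters is the crux. Instead of introducing one alphabet symbol per variable or per literal occurrence, I would encode the index of each variable and clause as a short binary (or few-letter) string, of length $O(\log(n+m))$, pre-placed on the initial board near the corresponding gadget. The dictionary is then a constant collection of template words whose letters match these encodings positionally; because the encoding strings are different at different gadgets, a template word is legally playable in only one position, so the forcing becomes purely geometric. The rack size can be kept to a small constant because each round only needs to supply the handful of unfixed letters required to complete one template. To avoid dictionary blow-up from the $O(\log n)$-long encodings, I would use a small number of concatenable ``word fragments'' whose only legal combinations are exactly the intended templates, building the encoding incrementally over several forced rounds if necessary.

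The main obstacle will be the careful verification that no unintended play is available once the alphabet is constant. In the original construction each symbol was unique, so confusions between gadgets were ruled out for free, while here a small alphabet creates many potential crossword collisions: a template word might in principle align with encoding bits of the wrong variable, or with filler letters elsewhere, or produce spurious secondary words. Blocking all these cases will require designing the board so that the encodings are separated by barrier rows/columns of letters that do not belong to any dictionary word, and checking exhaustively that, at every round, the intended binary word-choice is the only legal move up to mirror symmetry. Once this case analysis is complete, membership in PSPACE and polynomial-time computability of the reduction follow as in Theorem~\ref{thm:pspace1}, and the argument that passing does not affect the result will be deferred to a remark immediately after the theorem.
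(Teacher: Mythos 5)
Your high-level framing (formation-based hardness, two-word choice per variable round, clause gadgets checked against earlier assignments) matches the paper's, but the mechanism you propose for getting down to a constant-size alphabet and dictionary --- encoding each variable and clause index as an $O(\log(n+m))$-bit string pre-placed on the board, and using template words that match these encodings positionally --- has a genuine gap, and it sits exactly at the crux of the theorem. If the dictionary is constant, its words have constant length, so a single played word can only inspect a constant window of an encoding; it cannot distinguish among $n$ gadgets whose identifiers differ only in bits outside that window. Your fallback of ``concatenable word fragments'' built up ``incrementally over several forced rounds'' is where the actual difficulty lives and is left unargued: once short fragments are on the board, nothing in the proposal prevents a fragment from legally attaching to a prefix of the wrong gadget's encoding, and forcing a multi-round chain of fragments to reproduce a specific $O(\log n)$-length string consistently is itself a nontrivial gadget that you would need to design and verify. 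In effect you have reduced the theorem to an unproved lemma of comparable difficulty.

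The paper sidesteps index encodings entirely. Gadgets are never addressed by name: the initial board contains walls of dummy symbols (\s{\#}-words) that leave the active player exactly one geometric location to play at each round, and the structure of the formula is encoded by a single pre-placed symbol (\s{*} versus \s{\$}) at each intersection of a literal row with a clause column. The only choice a player ever has is between two template words at the forced location (e.g.\ the \s{T}/\s{F} assignment words), and clause satisfaction is tracked by a constant-size counter alphabet \s{0},\s{1},\s{2} recording the number of false literals seen so far, with the combination $\{\s{2},\s{*},\s{F}\}$ having no legal completion. Two further ingredients your proposal omits are also essential to the two-player statement: the \s{A}/\s{B} markers that alternate which player gets the binary choice (so that the $\forall$ player genuinely controls the even-indexed variables), and the score-balancing argument (player 2 starts with a 1-point lead and matches player 1's score move for move, so the winner is determined solely by who places the last tile). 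Finally, constant rack and word length are obtained not by compressing identifiers but by folding the long assignment words into a zig-zag of length-16 pieces. I would recommend abandoning the index-encoding idea and instead making placement forced purely by board geometry, which is what lets the alphabet and dictionary be constants in the first place.
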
 

\begin{proof}

We will proceed in steps. In section \ref{sec:sketch} we simply sketch the
high-level idea, which consists of a board construction that divides play into
two phases, the assignment and the satisfaction phase. Then, in sections
\ref{sec:initial}, \ref{assignment}, \ref{satisfaction} we
present in full a slightly simplified version of our construction which uses a
constant-size $\Sigma$ and $\Delta$ but unbounded $k$. Finally, in section
\ref{sec:constant} we give the necessary modifications to remove words of
unbounded length from the dictionary and obtain a reduction where $k$ is also
constant.

\subsection{Construction Sketch} \label{sec:sketch}

Our reduction is from \QBF. Suppose that we have a 3-CNF-QBF 
formula $\exists x_1 \forall x_2 \exists x_3 \ldots\phi$ with $n$ variables
$x_1, x_2, \ldots, x_n$, where $\phi$ has $m$ clauses $c_1, c_2, \ldots, c_m$.
We create an instance of $(\Sigma,\Delta,k,\position)$-SCRABBLE, as follows.

The board will be separated in $n$ roughly horizontal segments which correspond
to variables and $m$ vertical segments which correspond to clauses (see figure
\ref{highlevel}). 

\begin{figure}
\centering
\includegraphics[scale=0.3]{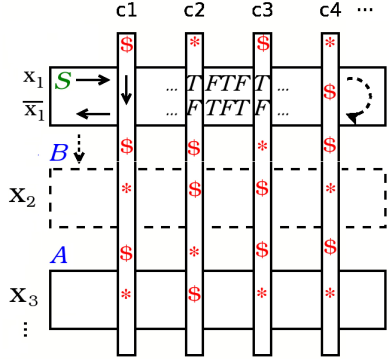}
\caption{A high level of the game}
\label{highlevel}
\end{figure}

%
%

Play will be divided into two phases: the assignment phase and the satisfaction
phase. In the first phase the two players will play within the horizontal
segments placing words that encode the truth values of the variables of the
formula (hence, mostly the letters T and F are used in this phase). With
appropriately placed walls we keep the players on track in this phase making
sure that each player, during her turn, has only one available position to
place a word (but possibly two availabe words to place if it is her turn to
decide on a variable's truth value).

For the second phase, the players place words in the vertical segments. Here,
we have encoded the structure of the formula by placing a different character
on the intersection of two segments depending on whether the corresponding
literal appears in that clause. The first player is always forced to play a
word in these intersections and she is only able to do so till the end if and
only if the chosen truth assignment satisfies all clauses.

\subsection{The initial position} \label{sec:initial}

Let us now describe the game more formally.  We create a $(\Sigma, \Delta, k,
\position)$ game of Scrabble, where the alphabet $\Sigma =$ \{\s{\#}, \s{*},
\s{\$}, \s{A}, \s{B}, \s{S}, \s{T}, \s{F}, \s{0}, \s{1}, \s{2}, \s{@}\}, the
rack size $k$ is an odd number depending on $m$ (particularly $k=10m+5$),
$\Delta$ is shown in table \ref{dictionary} and the initial position
$\position$ is described below.

\begin{table}
\centering
\footnotesize
\begin{tabular}{|l|l|} 

\hline \multicolumn{2}{|c|}{Dictionary} \\ 

\hline \hline 

Words & Definition \\ 

\hline

\s{S}(\s{T}\s{F})$^\frac{k-1}{2}$\s{S}, 
\s{F}(\s{T}\s{F})$^\frac{k-1}{2}$\s{S}, \s{S}(\s{F}\s{T})$^\frac{k-1}{2}$\s{F,} & 
\multirow{2}{*}{\emph{The literal played has value True.}} \\
\s{F}(\s{T}\s{F})$^\frac{k-3}{2}$\s{S}\s{T}\s{F}\s{T}\s{F},
\s{F}(\s{T}\s{F})$^\frac{k-3}{2}$\s{S}\s{F}\s{T}\s{F}\s{T} & \\

\hline   
   
\s{S}(\s{F}\s{T})$^\frac{k-1}{2}$\s{S}, 
\s{T}(\s{F}\s{T})$^\frac{k-1}{2}$\s{S}, \s{S}(\s{T}\s{F})$^\frac{k-1}{2}$\s{T}, & 
\multirow{3}{*}{\emph{The literal played has value False.}} \\ 
\s{T}(\s{F}\s{T})$^\frac{k-3}{2}$\s{S}\s{T}\s{F}\s{T}\s{F},
\s{T}(\s{F}\s{T})$^\frac{k-3}{2}$\s{S}\s{F}\s{T}\s{F}\s{T} & \\
      
\hline

\s{\#}\s{A}\s{T}, \s{\#}\s{A}\s{F} & \emph{First player's turn to assign truth value}; \\ 

\hline
   
\s{\#}\s{B}\s{S} &  \emph{Second player's turn to 
assign truth value}; \\ 

\hline

\s{\$}\s{\$}, \s{*}\s{*}, \s{\#}\s{A}, \s{\#}\s{B}, \s{\#}$^c$, for $c\le 2k$ 
& \multirow{2}{*}{\emph{Wall word}} \\   
\s{\#}$^5$Q\s{\#}$^9Q$\s{\#}$^9Q$\s{\#}$^5$, for $Q \in \{\s{\$},\s{*}\}$ & \\
      
\hline  

         
\s{0}\s{*}\s{*}, \s{1}\s{*}\s{*}, \s{2}\s{*}\s{*}, \s{0}\s{\$}\s{\$}, 
\s{1}\s{\$}\s{\$}, \s{2}\s{\$}\s{\$} &
\emph{Word formed during satisfaction phase}. \\

\hline

\s{0}\s{*}\s{*}\s{1}\s{T}\s{2}\s{0}, \s{0}\s{\$}\s{\$}\s{1}\s{T}\s{2}\s{0}, 
\s{0}\s{\$}\s{\$}\s{1}\s{F}\s{2}\s{0} 
& \emph{No unsatisfied literals in the clause so far}. \\
       
\hline

\s{1}\s{*}\s{*}\s{2}\s{T}\s{0}\s{1}, \s{1}\s{\$}\s{\$}\s{2}\s{T}\s{0}\s{1}, 
& \multirow{2}{*}{\emph{One unsatisfied literal in the clause so far}.} \\
\s{1}\s{\$}\s{\$}\s{2}\s{F}\s{0}\s{1}, \s{0}\s{*}\s{*}\s{2}\s{F}\s{0}\s{1} & \\
       
\hline

\s{2}\s{*}\s{*}\s{0}\s{T}\s{1}\s{2}, \s{2}\s{\$}\s{\$}\s{0}\s{T}\s{1}\s{2}, & \multirow{2}{*}{\emph{Two unsatisfied 
literals in the clause so far}.} \\
\s{2}\s{\$}\s{\$}\s{0}\s{F}\s{1}\s{2}, \s{1}\s{*}\s{*}\s{0}\s{F}\s{1}\s{2} & \\
        
\hline
 
\s{0}\s{1}\s{2}\s{0},  \s{1}\s{2}\s{0}\s{1}, \s{2}\s{0}\s{1}\s{2} & \multirow{1}{*}{\emph{Symbols' \s{0}, \s{1}, \s{2} order 
preserving words}.} \\
 
\hline

\end{tabular}      

\smallskip
\caption{The Dictionary $\Delta$. All valid words appear as regular expressions,
together with their definitions. Synonyms are grouped together.}
\label{dictionary}

\end{table}

For the following descriptions refer to figure \ref{highlevel} (or for a more 
detailed but still abstract preview to figure \ref{board} in the appendix).

The initial board $\board$ consists mainly of words containing the dummy symbol
\s{\#}. We use these words to build walls inside the board that will restrict
the players' available choices. 

There is also a symbol \s{S} initially placed on the board. This indicates the
starting point, where the first player is going to put her first word.

On the left side of the board, attached on the wall, there are several
appearences of the symbols \s{A} and \s{B} (shown in blue). These symbols
indicate whether it is the first or the second player's turn to choose truth
assignment (player 1 assigns values to the variables $x_{2i+1}$ whereas player
2 to the variables $x_{2i}$ for every $i = \lfloor \frac{n}{2} \rfloor$). 


Last, we need to construct the clauses. For every clause there is a
corresponding column as shown in the figure. We place the symbols \s{\$} and
\s{*} in the intersections with literals (horizontal lines) in order to
indicate which literals appear in the particular clause (if a literal appears
in the clause we put a \s{*} whereas if it doesn't we put a \s{\$}). In the
figure \ref{highlevel}, $c_4 = (x_1 \vee \neg x_2 \vee \neg x_3)$ 


In the initial position $\position$ of the game we also have:

\begin{itemize} 

   \item $\rack{1} = \rack{2} = r = \{\s{T}, \s{F}\}^\frac{k-1}{2} \cup \{\s{S}$\};
   \item $\bag = r^a (\s{0}\s{1}\s{2})^s\s{@}^{2k-6}\s{A}$, where $a\text{ }(=4n-2)$ is 
         the number of turns played during the assignment phase and 
         $s\text{ }(=\frac{40}{3}m^2n)$ the number of turns played during the satisfaction 
         phase (see sections \ref{assignment} and \ref{satisfaction});
   \item Player 2 has a lead of 1 point and it is first player's turn.

\end{itemize} 

\subsection{Assignment Phase}\label{assignment}

\begin{figure}
\subfloat[The assignment phase]
{\label{variables}
\includegraphics[width=0.5\textwidth]{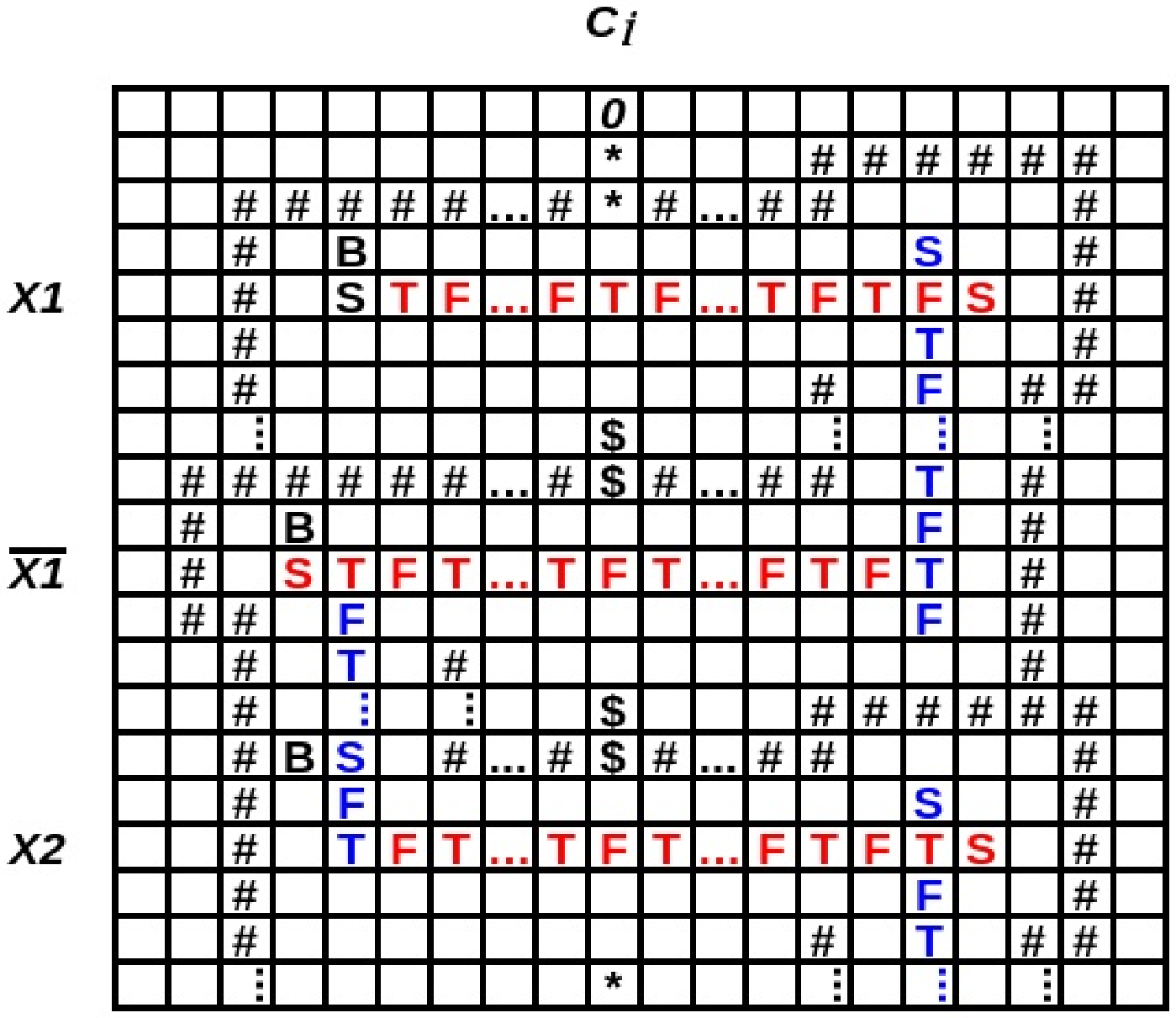}}
\subfloat[The satisfaction phase]
{\label{clauses}\includegraphics[width=0.6\textwidth]{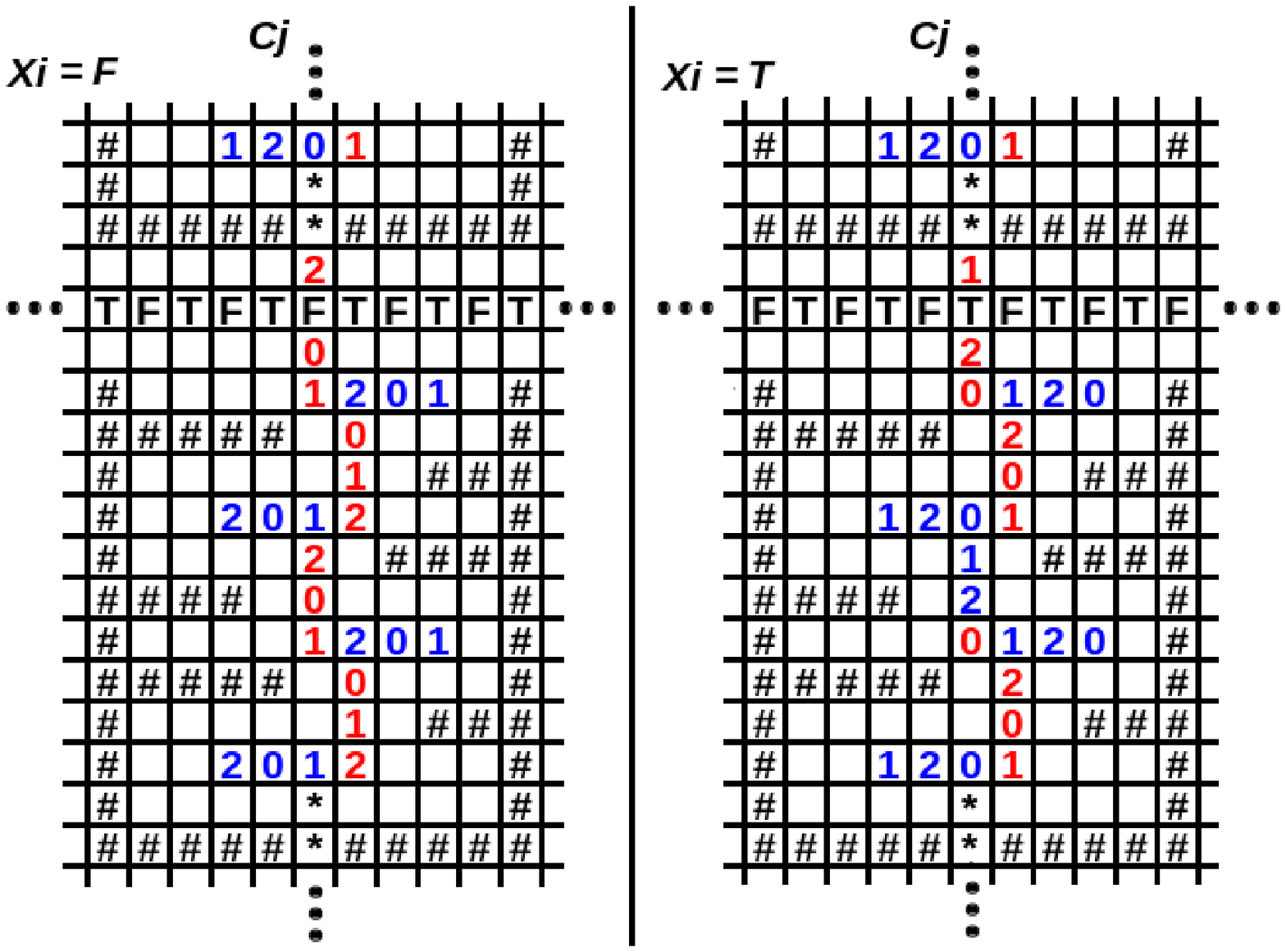}}

\caption{More detailed construction sketches.}

\end{figure}

In the first phase of the game (the assignment phase, see figure
\ref{variables} or \ref{var_form} in the appendix), players will repeatedly
draw the following letters: $\frac{k-1}{2}$ pairs (\s{T}s,\s{F}s) and a single
\s{S}. The only words that they can form with these symbols are the assignment
words from $\Delta$ (given in the first two lines in the dictionary of table
\ref{dictionary}). These words have length $k+1$, so in order to play them, one
of the symbols should already be on the board in the same line as the word
placement and the players have to empty their racks completely.

The major concern here is the assignment. We say that a word assigns the value
True (resp. False) to a variable if the intersection of the positive literal's
line with the clauses columns contain the symbol \s{T} (resp. \s{F}). 

Player 1 plays first and has to choose among two possible proper plays, one
that assigns the value True to $x_1$ and one the value False. Observe that
player 1 is always forced to play horizontally whereas player 2 only plays
vertically. To avoid having only player 1 choose the assignment, we use the
symbols \s{A}, \s{B} and \s{S}, as described in the appendix. 


Once the assignment is fixed, players' unique choices are predetermined by the
current position of the board and the dictionary. The amount of points that the
two players gain after this phase is identical and equal to $2n(2k+5)$ (there
are $2n$ zigzags and each player constructs two $(k+1)$-letter long words and
one 3-letter long word in each). 


\subsection{Satisfaction Phase}\label{satisfaction}

For this section, refer to figure \ref{clauses} (or to the more detailed preview
 \ref{clause_form} in the appendix).


After the assignment phase, the bag begins with a long string of the symbols
\s{0}, \s{1}, \s{2}. 
Satisfaction is realized by forming satisfaction words (the last four lines in
the dictionary). A clause is considered satisfied when the corresponding
vertical segment is fully filled with words.

The most crucial step of the satisfaction phase is the placement of the words
that intersect with literals. The numbers \textbf{0, 1, 2} indicate the number
of false literals the clause currently has. The combination of \{\s{*},
\s{\$}\}, \{\s{T}, \s{F}\} and \{\s{0}, \s{1}, \s{2}\} gives a unique vertical
proper word to play in the intersection of a literal (horizontal) segment with
the clause (vertical) segment. The ending symbol of the played word is the
number of false literals we have seen in the clause so far. The combination
\{\s{num}, \s{*}, \s{F}\} (where \s{num} = \s{0}, \s{1},or \s{2}) is important,
because it forms the word \s{num}\s{*}\s{*}\ldots\s{F}\ldots\s{num}+1 which is
the only one that increases \s{num} (the clause contains a false literal).

The words which contain only the symbols \s{0}, \s{1}, \s{2} reserve the order
of their appearence and by doing so enforce the appropriate number to begin the
next intersection word. 

Starting with literal $x_1$, the two players fill in words interchangeably,
beginning with player 1 who plays vertically. Observe that the only way that a
player won't be able to place a word is to be faced with the combination
\{\s{2},\s{*},\s{*}, \s{F}\} in an intersection (third false literal in the
clause).

Notice that player 2 doesn't really have an incentive to play vertically
because the number of points acquired if she plays vertically is equal to the
number of points if she plays horizontally and equal to $\frac{4l}{2}+3 =
2l+3$, where $l=\frac{s}{2nm}$ is the number of turns played inside a literal
segment (the additive term in the score comes in the vertical play case from
the 7-letter long word played during the first turn and in the horizontal play
case from the additional 3-letter long word which is formed during the last
turn).  Thus we can assume wlog that player 1 plays vertically and player 2
horizontally, and, despite that during the game there will be several possible
proper plays, the final score after the satisfaction phase is independent of
players' choices.



We argue now that if there is a satisfying assignment for the first order
formula then player 1 wins, else player 2 wins.

The key point in this proof is that player 2 ``matches'' player 1's moves
throughout the duration of the whole game. Since player 2 starts with a 1-point
lead she will continue to have the lead after the end of the satisfaction
phase. 

If there is a satisfying assignment, then by the end of the game player 1 gets
the last symbol in the bag which is an \s{A} and forms an additional 3-letter
long word, which makes him the winner of the game with $\score{1}=\score{2}+2$.

On the other hand, if there is no satisfying assignment the two players will
have at least one set of \s{0},\s{1},\s{2} on their hands and probably some
copies of the useless symbol \s{@} which doesn't form any words, so player 1 is
not going to get the symbol \s{A} from the bag. Player 2 is the last player to
place a word on the board.  This makes him the winner of the game with
$\score{2}= \score{1}+1$.

Let us also point out that the fact that we assumed players cannot pass does
not affect our arguments so far. Indeed, observe that at any point when it's a
player's turn to play, that player is behind in the score. If she chooses to
pass, the other player may also pass. Repeating this a second time ends the
game, according to standard Scrabble rules. Thus, if the current player has a
winning strategy it must be one where she never chooses to pass.

\subsection{Constant rack and word size} \label{sec:constant}

In order for the proof to work for constant size words and rack, we need to
break the long assignment words into constant size ones and zig-zag through the
clauses (see figure \ref{fixed}).  Once we reduce the size of the words to a
constant an unbounded size rack is unnecessary. In fact the rack has to be
smaller than the maximum word size by one symbol. 

\begin{figure}
\centering
\includegraphics[scale=0.20]{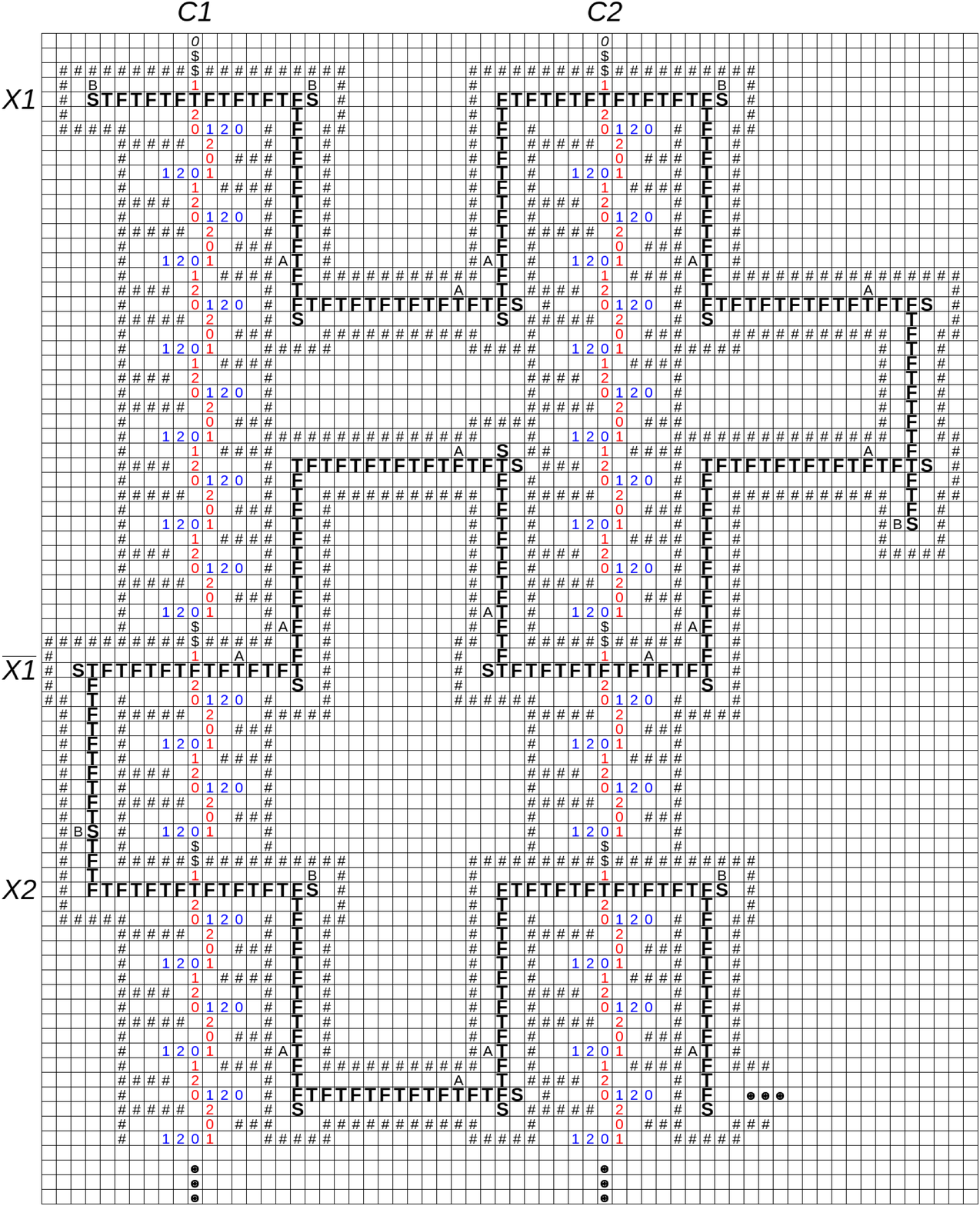}
\caption{Modifications for fixed size words and rack}
\label{fixed}
\end{figure}

Observe that the length of the assignment word should be equal to the height of
the clause segments between a negative literal and its next positive. This
distance is $4 \bmod 6$.  Also, the word has to be longer than the width of the
clause segments (which is 11). Setting the word size equal to 16 ($k=15$),
satisfies both requirements. Careful counting arguments fix the zig-zaging
between a positive and a negative literal (see figure \ref{fixed}). 

We change the board construction to adopt the modifications:

\begin{itemize}

\item We build walls all around the board to force the aforementiond zig-zaging
pattern. The walls too have to consist of constant size parts (the wall is part
of the dictionary).  

\item Last, we need to place one \s{A} or \s{B} symbol in every horizontal or
vertical section, so that we force the players to put their $\s{S}$ in the
beginning or the end of their played word (forcing thus the assignment
throughout variable segments) and also to make sure that the players will gain
an equal amount of points ($=k+3$) on every turn.

\end{itemize}

The rest of the proof follows the ideas of the proof for arbitrary size rack and 
words.

\end{proof}

\section{Conclusions}\label{conclusions} 
We have established the PSPACE-hardness of (deterministic) Scrabble in two
different ways. The main ingredients for our two proofs are the possibility of
placing words in many places in the first, and the possibility of forming
several different words in the second. We have also established that hardness 
remains even when all relevant parameters are small constants.

Several interesting further questions can be posed in the same vein. Are the
constants we have used optimal? What is the minimum-size alphabet or dictionary
for which the problem is still PSPACE-hard? In particular, does the problem
become tractable when the alphabet contains just one letter, or is the
complexity of placing the tiles on the board enough to make the problem
hard?

Another interesting question was posed by Demaine and Hearn
\cite{demaine2005playing}: is there a polynomial-time algorithm to determine
the move that would maximize the score achieved in this round? Of course, in
the case of a bounded-size rack the problem is immediately in P, but deciding
how to place $n$ letters on the board optimally could be a much harder problem.

\bibliography{scrabble}

\appendix
\newpage

\section{Omitted proofs and figures}

\subsection{Proof of Facts \ref{eatsall} and \ref{fact2}}

\begin{proof}


The dummy words appearing on the board in the beginning of the game are:

\begin{itemize} 

\item \s{\#}$^{4r+3}$;
\item \s{@}\s{$x_i$};
\item \s{$x_i$}\s{@}; 
\item \s{\#}$^r$\s{$x_i$}\s{\#}$^{r-1}$;
\item \s{\#}$\prod_{u=1}^p$(\s{\#}\s{$\test{i}{j_u}{}$})\s{\#}$\prod_{v=1}^p$(\s{$\test{i}{k_v}{\neg}$}\s{\#})\s{\#} and
\item \s{@}\s{$\test{i}{j}{(\neg)}$},

\end{itemize}

\noindent for every variable $x_i$, and for every appearance of variable $i$ in the clauses.

Let us notice that as long as the contents of the player's rack consist only of
the symbols $\{\mathbf{x_i},\$ \}$ or $\{\test{i}{j}{(\neg)},\ast\}$, the only
possible word which can be formed has length $2r+2$, since the \s{@} characters
are unusable.  Thus, in the value-assigning phase players must use all their
racks and two letters from the board to form words.  This establishes Fact
\ref{eatsall}.

From the previous fact we gather that during each round in the
value-assigning	phase, the contents of the player's rack are
$\,\mathbf{x_i\$}^{2r-1}$ for some $i$. A simple case by case analysis shows
that the player can form a word from these letters only in one of the two ways
shown in Figure \ref{fig:assignments}. This establishes Fact \ref{fact2}.

%

\end{proof}				

\subsection{Proof of Theorem \ref{thm:pspace1}}

\begin{proof}

Given is a first order formula $\exists x_1 \forall x_2 \ldots \phi$, with $n$
variables and $m$ clauses. We can assume that $n$ is even. If it is not we just
add in $\phi$ a new dummy clause in which a new variable $x_{n+1}$ will appear
both positive and negated. 

We first create a propositional formula $\phi'$ by duplicating all clauses from
$\phi$.  Observe that the new instance of \QBF\ $\exists x_1 \forall x_2 \ldots
\phi'$ is equivalent to the original.

It is easy to reduce the new instance of \QBF\ to a game of Scrabble
$\mathcal{S}$. The alphabet $\Sigma$, the dictionary $\Delta$, the rack size
$k$, the board construction $\board$ are defined in the same way as in proof of
lemma \ref{lemma:np}. The bag sequence $\bag$ is again defined almost
identically apart from the addition of the symbol \s{@} in the very end of it.
The scores are $\score{1} = \score{2} - 1$ (i.e.  player 2 has a lead of 1
point) and it is first player's turn.

The two players are going to play the normal game of Scrabble (starting by
player 1) in a the board obtained if we apply the previous construction to the
duplicated formula.  It is easy to observe that, while the number of variable
gadgets is the same, their sizes are doubled since each literal appears in
twice as many clauses as in $\phi$. 

In the assignment phase, the two players will assign truth values to the
variables $x_1$, $x_2$, \ldots, $x_n$ interchangeably. Since $n$ is even,
player 2 is the last player to put an assignment word on the board, leaving
player 1 to begin phase 2.

For the satisfaction part, observe that, for every clause $c_u$ there is an
indentical clause $c_u'$. If there is a literal $l_i$ that satisfies $c_u$ then
$l_i$ also satisfies $c_u'$. That means that player 2 cannot be left without an
available word to play since she can always match player 1's placement. 

If the formula is satisfiable then the bag will eventually empty (as it was
shown in section \ref{placement}) and the last player to place a word will be player
1, using the symbol \s{@} to create a two-letter word. In this case player 1
wins with $\score{1} = \score{2} + 1$.

On the other hand, if the formula is not satisfiable, then the last player to
place a word will be player two, leaving the score $\score{1} = \score{2} - 1$
and making player 2 the winner of the game.  

\end{proof} 

\subsection{Omitted details from section \ref{assignment}}

In order to enforce the two players to assign values to the variables
interchangeably we need to use the symbols \s{A} and \s{B} (attached to the
left wall) (see figure \ref{board}). The place where player 2 is going put 
the \s{S} symbol that holds
in her rack when she plays her vertical word on the left side of the board
specifies which player's turn is to choose the truth value of the next
variable. The symbol \s{S} indicates that the player has the choice to continue
either with (\s{T}\s{F}) or with (\s{F}\s{T}), choosing thus the assignment.
Now, symbol \s{B} enforces only an \s{S} attaced to it (forming the valid word
\s{\#}\s{B}\s{S}), which gives player 2 the ability to reset the assignment by
using one of the reset words (last item on the first two lines of the
dictionary). On the other hand, symbol \s{A} enforces a \s{T} or \s{F} symbol
attached next to it (forming one of the valid words \s{\#}\s{A}\s{T},
\s{\#}\s{A}\s{F}), which leaves symbol \s{S} at the end of the played word and
thus giving player 1 the chance to choose among two possible proper plays,
determining the truth value of the new variable (see figure \ref{var_form}).

\subsection{Omitted figures}

\begin{figure}

\centering
\includegraphics[scale=0.6]{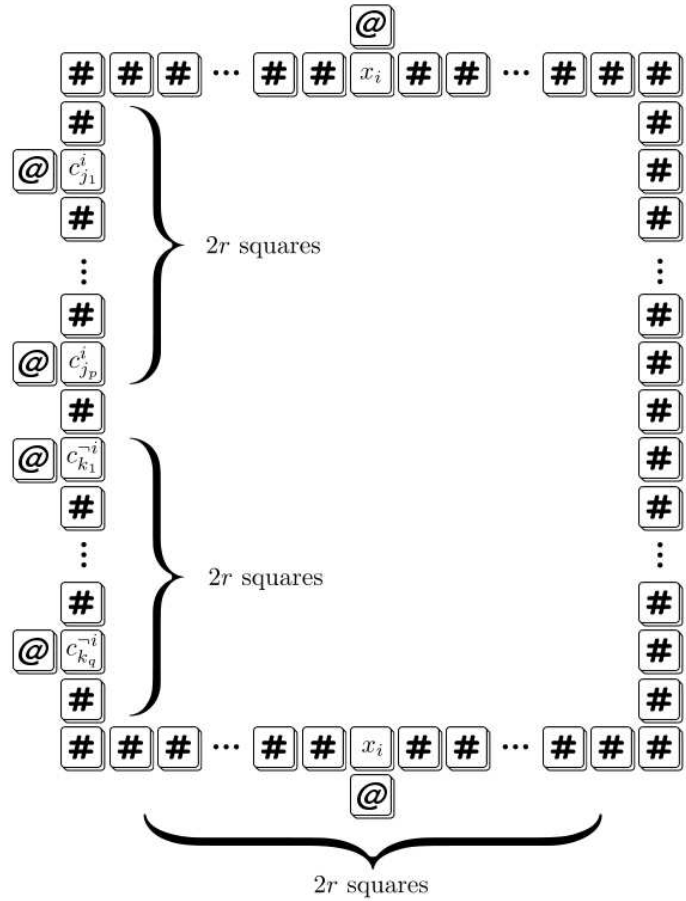}
\caption{The gadget for variable $x_i$ (section \ref{placement}).}
\label{var_place}

\end{figure} 

\begin{figure}

\centering
\includegraphics[scale=0.5]{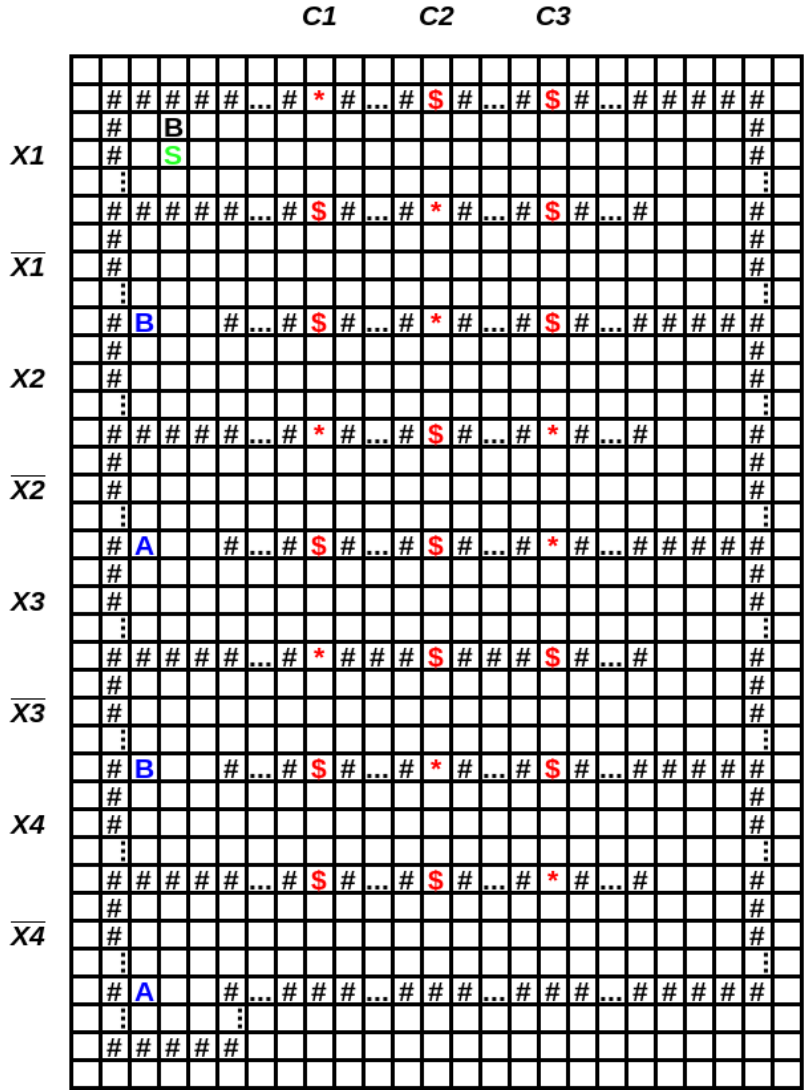}
\caption{An abstract construction of the board for $\phi = (x_1 \vee \neg{x_2} \vee
\neg{x_3}) \wedge (\neg{x_1} \vee x_2 \vee x_4) \wedge (\neg{x_2} \vee x_3
\vee \neg{x_4})$ (section \ref{formation}).}
\label{board}

\end{figure} 

\begin{figure}

\centering
\includegraphics[scale=0.5]{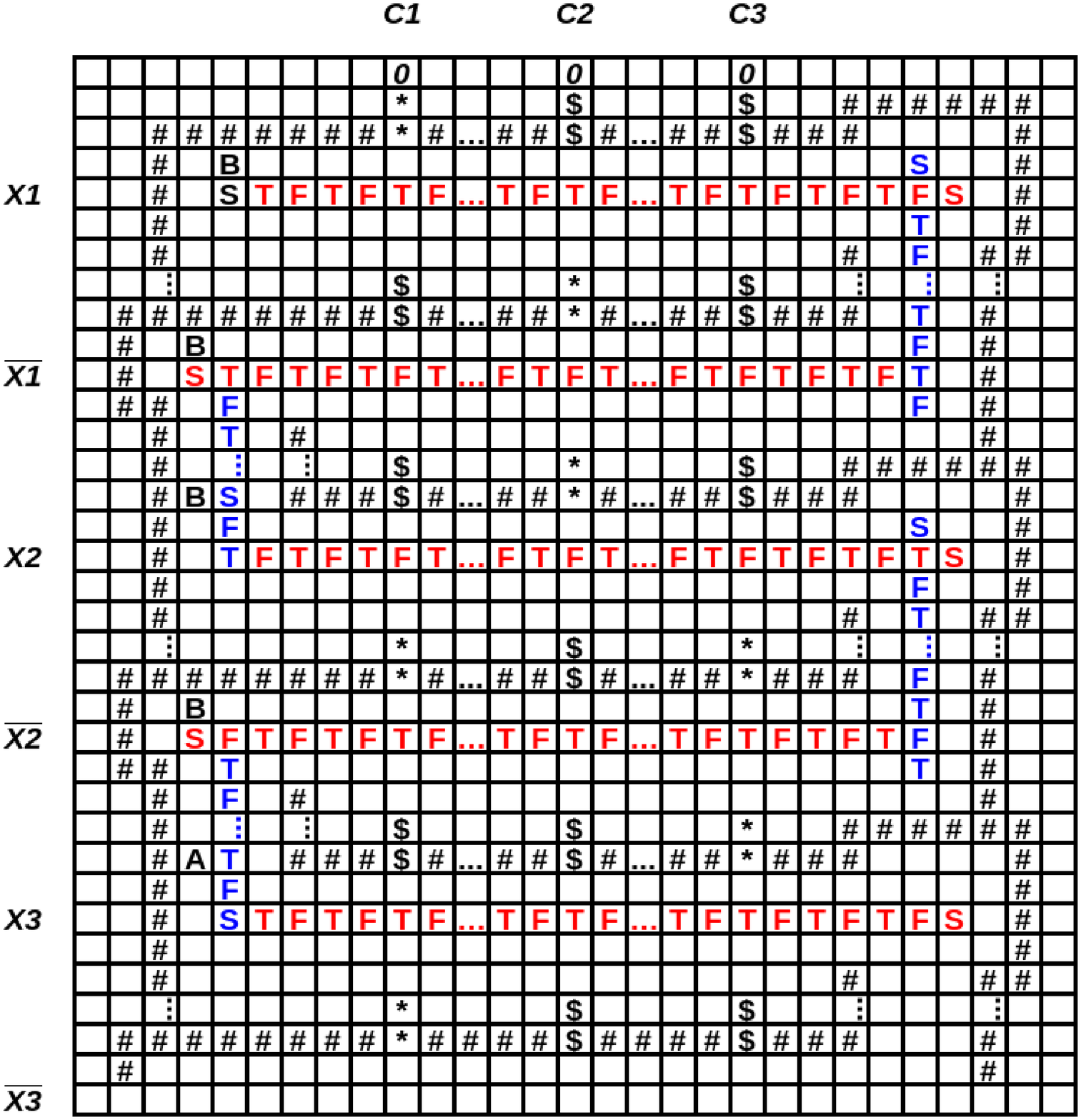}
\caption{A more detailed view of the board for the assignment phase of section
\ref{formation}. In this example $x_1 = T, x_2 = F, x_3 = T$.}
\label{var_form}

\end{figure} 

\begin{figure}

\centering
\includegraphics[scale=0.4]{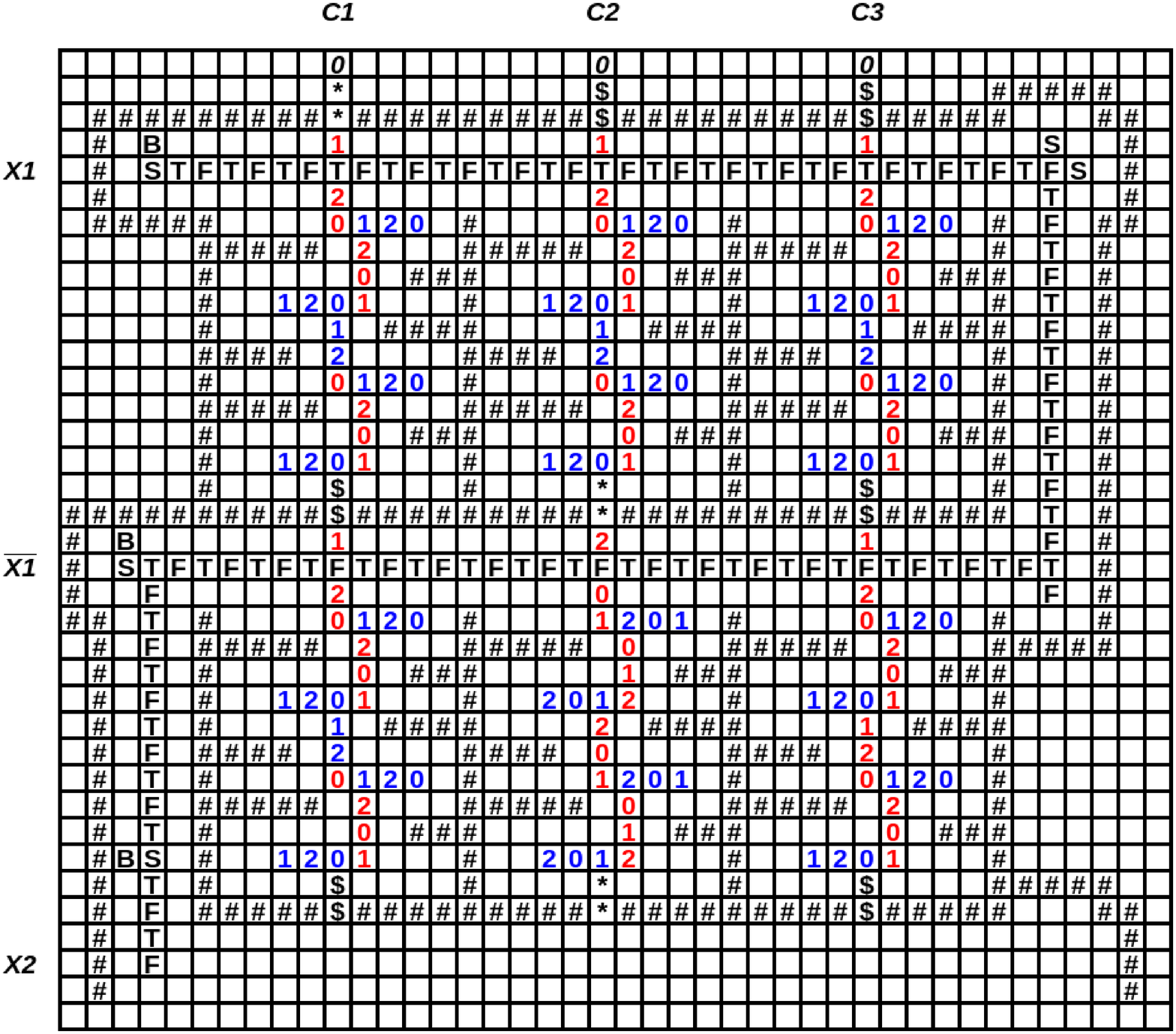}
\caption{A more detailed view of the board for the satisfaction phase of section 
\ref{formation}.}
\label{clause_form}

\end{figure}

\end{document}